\documentclass[twocolumn]{article}
\usepackage{amsmath, amssymb, amsthm}
\usepackage{braket}
\usepackage{graphicx}
\usepackage{hyperref}
\usepackage{authblk}
\usepackage{multicol}
\usepackage{cuted} 
\usepackage{mathrsfs}
\usepackage{fancyhdr}
\usepackage{booktabs}
\usepackage{multirow}
\usepackage{enumitem}
\usepackage[margin=2.5cm]{geometry}
\usepackage[numbers,square]{natbib} 
\newtheorem{theorem}{Theorem}

\newtheorem{definition}{Definition}

\newcommand{\mywide}[2][0.9]{\begin{strip}
\noindent\parbox{\textwidth}{#2}
\end{strip}}

\title{Entanglement as a Witness of Quantum Coherence: \\ 
A Bipartite Monty-Hall Protocol}

\author{Jorge Meza-Dom{\'\i}nguez%
\thanks{E-mail: \href{mailto:jorge.meza@cinvestav.mx}{jorge.meza@cinvestav.mx}}}
\affil{Departamento de F{\'\i}sica, Centro de Investigaci{\'o}n y de 
Estudios Avanzados del Instituto Polit{\'e}cnico Nacional, 
Av. Instituto Polit{\'e}cnico Nacional 2508, San Pedro Zacatenco, 
M{\'e}xico 07360, CDMX.}
\date{\today}

\begin{document}

\maketitle

\mywide{
\begin{abstract}
We present a bipartite protocol inspired by the Monty Hall puzzle
that operationally distinguishes quantum coherence from classical
ignorance. A principal qutrit is entangled with an ancillary qutrit
via a controlled unitary, preparing $|\Psi\rangle = 
\frac{1}{\sqrt{3}}(|A,0\rangle + |B,1\rangle + |C,2\rangle)$. A
rank-1 projective discard then eliminates one basis state, leaving
a coherent superposition of the two remaining states. Finally, the
ancilla and qutrit are measured, yielding joint probabilities that
encode the interplay between superposition and measurement
back-action.

We show that the conditional probability $P(B|\text{anc}=0)$ takes
the value $1/4$ in both quantum mechanics and the classical
ignorant-host model, making it unsuitable as a witness. The true
quantum-classical separation emerges in conditional joint
probabilities that correlate ancilla outcomes with specific discard
operations. We define witnesses $\mathcal{W}_{i,j} = P(anc=i,
qutrit=j \mid \text{discard } k)$ where $j$ differs from the
ancilla-implied state. Quantum mechanics predicts $\mathcal{W} =
1/4$, while any classical epistemic model with perfect initial
correlations yields $\mathcal{W} = 0$.

We provide the explicit $9 \times 9$ unitary matrix, a complete
analysis of all measurement outcomes, and a detailed proof of the
violation. The witness is fully immune to white noise and robust
against moderate dephasing. The protocol requires only a single
pair of entangled qutrits and sequential measurements---no spatial
separation, no multiple copies, and no complex sets of
incompatible observables. This makes it suitable for advanced
undergraduate laboratories and provides a pedagogically accessible
test of the ontic-epistemic distinction in quantum foundations.
\end{abstract}
}

\section{Introduction}

\subsection{The ontological status of the quantum state}

A central question in the foundations of quantum mechanics concerns 
the ontological status of the quantum state vector $|\psi\rangle$. 
Does it represent a physical reality existing independently of 
observation (the \emph{ontic} interpretation), or does it merely 
encode an observer's incomplete knowledge about an underlying 
deterministic reality (the \emph{epistemic} interpretation)? 
\cite{Harrigan2010, Spekkens2005}

The epistemic view, tracing its lineage to Einstein's statistical 
interpretation, holds that quantum probabilities arise from ignorance, 
much like classical probabilities in statistical mechanics. A qubit 
in the state $\frac{1}{\sqrt{2}}(|0\rangle + |1\rangle)$ would, in 
this view, be analogous to a coin that is definitely either heads or 
tails but whose identity we do not know. The superposition reflects 
our knowledge, not the physical state of the system.

This intuitively appealing picture has been progressively constrained 
by a series of no-go theorems. Bell's theorem \cite{bell1964} 
demonstrated that no local hidden-variable theory can reproduce all 
quantum predictions, ruling out epistemic models that respect locality. 
However, Bell's theorem leaves open the possibility of nonlocal 
epistemic theories, such as the de Broglie-Bohm pilot-wave 
interpretation \cite{bohm1952}, which is deterministic and nonlocal 
but assigns a privileged role to position.

The Kochen-Specker theorem \cite{kochen1967} attacked the problem 
from a different angle, showing that no non-contextual hidden-variable 
theory can reproduce quantum mechanics. The original proof required a 
set of 117 vectors in three dimensions, though modern versions have 
reduced this to 33 vectors or, for state-dependent tests, to as few 
as 5 measurements in the KCBS inequality \cite{kcbs2008}.

More recently, the Pusey-Barrett-Rudolph theorem \cite{pbr2012} 
provided a different argument against epistemic interpretations, 
showing that distinct quantum states must have overlapping supports 
in the space of ontic states, leading to contradictions when multiple 
copies of a system are considered. The PBR theorem requires preparation 
independence and access to multiple independent copies.

\subsection{The need for simpler operational tests}

While these theorems are mathematically rigorous, their experimental 
implementation often requires significant resources: entangled pairs 
of particles for Bell tests, complex sets of measurements for 
Kochen-Specker proofs, or multiple copies of the system for PBR 
experiments. There is a need for simpler operational tests that can 
distinguish quantum coherence from classical ignorance using minimal 
resources and conceptually accessible protocols.

In this work, we provide such a test using a bipartite system composed 
of a principal qutrit and an ancillary qutrit. The protocol is inspired by the Monty Hall puzzle 
\cite{selvin1975}, a classic problem in conditional probability that 
vividly illustrates how information acquisition updates probabilities. 
By translating the Monty Hall scenario into quantum language, we 
obtain a protocol where the difference between quantum coherence and 
classical ignorance manifests as a discrepancy in conditional joint 
probabilities: $1/4$ versus $0$.

\subsection{The Monty Hall problem as a probe of probability updates}

The Monty Hall problem is a celebrated puzzle in probability theory. 
A contestant faces three doors, behind one of which is a prize. The 
contestant initially chooses a door. The host, who knows the prize 
location, opens a different door that does not reveal the prize. The 
contestant is then offered the option to switch. Counterintuitively, 
switching yields a $2/3$ probability of winning.

The key to understanding this result is that the host's action provides 
\emph{information}. The host never opens the prize door, so the act 
of opening a specific door eliminates that door as a possibility while 
concentrating the probability on the remaining unopened door.

A less common variant—which will prove essential for our quantum 
protocol—is the ignorant host or Monty Hall variant 
\cite{rosenthal2005}. Here, the host does not know the prize location 
and opens a door at random. In this variant, the host's action provides 
no information, and all conditional probabilities remain $1/3$. Our 
quantum protocol maps naturally onto this variant, with the crucial 
difference that quantum superposition introduces coherence between 
the surviving possibilities.

\section{The Classical Monty Hall Problem}

\subsection{Standard variant: Informed host}

Let $\lambda \in \{A, B, C\}$ denote the true prize location, with 
uniform prior $P(\lambda) = 1/3$. The contestant initially chooses 
door $A$. The host then opens a door $h \in \{B, C\}$ satisfying 
$h \neq \lambda$. The host's strategy is:
\begin{equation}\label{eq:hoststrategy}
\begin{aligned}
P(h=B|\lambda=A) &= \frac{1}{2}, \\
P(h=B|\lambda=B) &= 0, \\
P(h=B|\lambda=C) &= 1,
\end{aligned}
\end{equation}
and symmetrically for $h=C$. The probability that the host opens 
door $C$ is:
\begin{equation}\label{eq:hostprob}
\begin{aligned}
P(h=C) &= \sum_{\lambda} P(\lambda) P(h=C|\lambda) \\
&= \frac{1}{3}\cdot\frac{1}{2} + \frac{1}{3}\cdot 1 
+ \frac{1}{3}\cdot 0 = \frac{1}{2}.
\end{aligned}
\end{equation}
By Bayes' theorem:
\begin{equation}\label{eq:bayes}
P(\lambda=B|h=C) 
= \frac{P(h=C|\lambda=B)P(\lambda=B)}{P(h=C)} 
= \frac{2}{3}.
\end{equation}
Switching from $A$ to $B$ doubles the winning probability.

\subsection{Ignorant host variant}

Now suppose the host is ignorant of the prize location and opens a 
door uniformly at random:
\begin{equation}\label{eq:ignorant}
P(h=k|\lambda) = \frac{1}{3}, \quad \forall k, \lambda \in \{A,B,C\}.
\end{equation}
Given that the host opens door $C$:
\begin{equation}\label{eq:ignorant_cond}
P(\lambda=B|h=C) 
= \frac{P(h=C|\lambda=B)P(\lambda=B)}{P(h=C)} 
= \frac{1}{3}.
\end{equation}
All three doors remain equally likely. The host's action provides no 
information. This is the classical benchmark against which we will 
compare our quantum protocol.

\section{Quantum Protocol: Step-by-Step Description}

\subsection{System and basis}

Let $\mathcal{H}_3$ be the Hilbert space of a qutrit with computational 
basis $\{|A\rangle, |B\rangle, |C\rangle\}$. These basis states 
correspond to the three doors. Let $\mathcal{H}_3^{\text{anc}}$ be a 
second qutrit with basis $\{|0\rangle, |1\rangle, |2\rangle\}$. The 
total Hilbert space is $\mathcal{H} = \mathcal{H}_3 \otimes 
\mathcal{H}_3^{\text{anc}}$, a $9$-dimensional space.

We work in the ordered basis:
\begin{equation}\label{eq:basis}
\begin{aligned}
\{&|A,0\rangle, |B,0\rangle, |C,0\rangle, \\
&|A,1\rangle, |B,1\rangle, |C,1\rangle, \\
&|A,2\rangle, |B,2\rangle, |C,2\rangle\}.
\end{aligned}
\end{equation}

\subsection{Step 1: State preparation and entanglement}

The qutrit is prepared in the symmetric superposition:
\begin{equation}\label{eq:psi0}
|\psi_0\rangle = \frac{1}{\sqrt{3}}\bigl(|A\rangle + |B\rangle 
+ |C\rangle\bigr).
\end{equation}
The ancilla is initialized in $|0\rangle_{\text{anc}}$. The combined 
initial state is:
\begin{equation}\label{eq:Psi_initial}
|\Psi_{\text{initial}}\rangle = |\psi_0\rangle \otimes |0\rangle 
= \frac{1}{\sqrt{3}}\bigl(|A,0\rangle + |B,0\rangle + |C,0\rangle\bigr).
\end{equation}

We apply a controlled unitary $U$ that entangles the qutrit and ancilla:
\begin{equation}\label{eq:entangle_target}
U|\Psi_{\text{initial}}\rangle 
= \frac{1}{\sqrt{3}}\bigl(|A,0\rangle + |B,1\rangle 
+ |C,2\rangle\bigr) \equiv |\Psi_{\text{ent}}\rangle.
\end{equation}

This entangling operation correlates each qutrit basis state with a 
distinct ancilla state: $|A\rangle \leftrightarrow |0\rangle$, 
$|B\rangle \leftrightarrow |1\rangle$, $|C\rangle \leftrightarrow 
|2\rangle$. The ancilla serves as a perfect ``which-door'' marker, 
recording which basis state the qutrit would be found in if measured 
immediately.

\subsection{The entangling unitary matrix}

To construct $U$, we define its action on the basis states with ancilla 
in $|0\rangle$:
\begin{align}
U|A,0\rangle &= |A,0\rangle, \label{eq:UA0}\\
U|B,0\rangle &= |B,1\rangle, \label{eq:UB0}\\
U|C,0\rangle &= |C,2\rangle. \label{eq:UC0}
\end{align}

By linearity, applying $U$ to $|\Psi_{\text{initial}}\rangle$ yields 
exactly Eq.~(\ref{eq:entangle_target}). To complete $U$ to a full 
$9 \times 9$ unitary, we define its action on the remaining six basis 
states as a cyclic permutation on the ancilla for each fixed qutrit 
state. The complete matrix in the ordered basis of Eq.~(\ref{eq:basis}) 
is:

\begin{equation}\label{eq:Ufull}
U = \begin{pmatrix}
1 & 0 & 0 & 0 & 0 & 0 & 0 & 0 & 0 \\[2pt]
0 & 0 & 0 & 0 & 0 & 0 & 1 & 0 & 0 \\[2pt]
0 & 0 & 0 & 0 & 1 & 0 & 0 & 0 & 0 \\[2pt]
0 & 0 & 0 & 0 & 0 & 1 & 0 & 0 & 0 \\[2pt]
0 & 1 & 0 & 0 & 0 & 0 & 0 & 0 & 0 \\[2pt]
0 & 0 & 0 & 0 & 0 & 0 & 0 & 1 & 0 \\[2pt]
0 & 0 & 0 & 1 & 0 & 0 & 0 & 0 & 0 \\[2pt]
0 & 0 & 0 & 0 & 0 & 0 & 0 & 0 & 1 \\[2pt]
0 & 0 & 1 & 0 & 0 & 0 & 0 & 0 & 0
\end{pmatrix}.
\end{equation}

$U$ is a permutation matrix, hence unitary: $U^\dagger U = I_9$. One 
readily verifies that applying $U$ to $|\Psi_{\text{initial}}\rangle$ 
produces Eq.~(\ref{eq:entangle_target}).

\subsection{Step 2: Coherent discard}

The second step implements the analogue of the host opening a door. 
We apply a \textbf{projective discard} operation that eliminates one 
of the three basis states while preserving coherence between the 
other two.

We define three rank-1 projectors on $\mathcal{H}_3$ that project onto 
the symmetric superpositions of the surviving subspaces:

\begin{definition}[Discard projectors]\label{def:projectors}
The discard projectors are:
\begin{equation}\label{eq:proj_def}
\begin{aligned}
\Pi_{BC} &= |BC\rangle\langle BC| 
= \frac{1}{2}\bigl(|B\rangle + |C\rangle\bigr)
\bigl(\langle B| + \langle C|\bigr), \\[4pt]
\Pi_{AC} &= |AC\rangle\langle AC| 
= \frac{1}{2}\bigl(|A\rangle + |C\rangle\bigr)
\bigl(\langle A| + \langle C|\bigr), \\[4pt]
\Pi_{AB} &= |AB\rangle\langle AB| 
= \frac{1}{2}\bigl(|A\rangle + |B\rangle\bigr)
\bigl(\langle A| + \langle B|\bigr),
\end{aligned}
\end{equation}
where $|BC\rangle = \frac{1}{\sqrt{2}}(|B\rangle + |C\rangle)$, 
$|AC\rangle = \frac{1}{\sqrt{2}}(|A\rangle + |C\rangle)$, and 
$|AB\rangle = \frac{1}{\sqrt{2}}(|A\rangle + |B\rangle)$.
\end{definition}

Each $\Pi_{ij}$ is a rank-1 projector, as it projects onto a single 
vector $|ij\rangle$. This is crucial: a rank-2 projector such as 
$|B\rangle\langle B| + |C\rangle\langle C|$ would commute with 
measurements in the $\{|A\rangle,|B\rangle,|C\rangle\}$ basis and 
eliminate the quantum coherence essential to our protocol. The use 
of rank-1 projectors onto symmetric superpositions preserves the 
coherence between surviving states.

\subsection{Applying the projectors to the entangled state}

We now apply each projector (acting on the qutrit only) to the 
entangled state $|\Psi_{\text{ent}}\rangle$.

\subsubsection*{Discard A: Apply $\Pi_{BC}$}

\begin{equation}\label{eq:apply_BC}
\begin{aligned}
&\Pi_{BC} \otimes I |\Psi_{\text{ent}}\rangle =\\
&\frac{1}{\sqrt{3}}\Bigl(\Pi_{BC}|A\rangle|0\rangle 
+ \Pi_{BC}|B\rangle|1\rangle + \Pi_{BC}|C\rangle|2\rangle\Bigr) \\[4pt]
&= \frac{1}{\sqrt{3}}\Bigl(0 \cdot |0\rangle 
+ \frac{1}{\sqrt{2}}|BC\rangle|1\rangle 
+ \frac{1}{\sqrt{2}}|BC\rangle|2\rangle\Bigr) \\[4pt]
&= \frac{1}{\sqrt{6}}|BC\rangle\bigl(|1\rangle + |2\rangle\bigr).
\end{aligned}
\end{equation}

The squared norm is $|\frac{1}{\sqrt{6}}|^2 \cdot 2 = 1/3$. Thus, 
the probability of discarding $A$ is $P(\Pi_{BC}) = 1/3$. The 
normalized post-discard state is:
\begin{equation}\label{eq:psi_BC}
|\Psi_{BC}\rangle = \frac{1}{\sqrt{2}}|BC\rangle\bigl(|1\rangle 
+ |2\rangle\bigr).
\end{equation}

\subsubsection*{Discard B: Apply $\Pi_{AC}$}

\begin{equation}\label{eq:apply_AC}
\begin{aligned}
\Pi_{AC} \otimes I |\Psi_{\text{ent}}\rangle 
= \frac{1}{\sqrt{6}}|AC\rangle\bigl(|0\rangle + |2\rangle\bigr).
\end{aligned}
\end{equation}

Probability: $P(\Pi_{AC}) = 1/3$. Normalized state:
\begin{equation}\label{eq:psi_AC}
|\Psi_{AC}\rangle = \frac{1}{\sqrt{2}}|AC\rangle\bigl(|0\rangle 
+ |2\rangle\bigr).
\end{equation}

\subsubsection*{Discard C: Apply $\Pi_{AB}$}

\begin{equation}\label{eq:apply_AB}
\begin{aligned}
\Pi_{AB} \otimes I |\Psi_{\text{ent}}\rangle 
= \frac{1}{\sqrt{6}}|AB\rangle\bigl(|0\rangle + |1\rangle\bigr).
\end{aligned}
\end{equation}

Probability: $P(\Pi_{AB}) = 1/3$. Normalized state:
\begin{equation}\label{eq:psi_AB}
|\Psi_{AB}\rangle = \frac{1}{\sqrt{2}}|AB\rangle\bigl(|0\rangle 
+ |1\rangle\bigr).
\end{equation}

\subsection{Observations on the post-discard states}

Several features of the post-discard states are worth noting:

\begin{enumerate}[leftmargin=*]
\item \textbf{Ancilla encoding:} In each post-discard state, the ancilla 
exists in a superposition of exactly two of the three basis states. 
The ancilla \emph{never} shows the state corresponding to the discarded 
door. This is the quantum analogue of the Monty Hall rule: the host 
never opens the prize door.

\item \textbf{Non-orthogonality:} The post-discard qutrit states 
$|BC\rangle, |AC\rangle, |AB\rangle$ are not mutually orthogonal. 
Their pairwise overlaps are:
\begin{equation}\label{eq:overlaps}
\langle BC|AC\rangle = \langle BC|AB\rangle 
= \langle AC|AB\rangle = \frac{1}{2}.
\end{equation}
This non-orthogonality means the experimenter cannot perfectly 
discriminate which discard occurred by measuring the qutrit alone. 
The ancilla measurement resolves this ambiguity.

\item \textbf{Coherence preservation:} After discarding $A$, the qutrit 
is in the state $|BC\rangle = \frac{1}{\sqrt{2}}(|B\rangle + |C\rangle)$, 
a coherent superposition with equal amplitudes and a definite relative 
phase $(+1)$. This coherence is the quantum feature that distinguishes 
our protocol from classical probability updates.
\end{enumerate}

\subsection{Step 3: Joint measurement}

After the discard, we perform two measurements:
\begin{enumerate}[leftmargin=*]
\item \textbf{Ancilla measurement:} Measure the ancilla in the 
computational basis $\{|0\rangle, |1\rangle, |2\rangle\}$.
\item \textbf{Qutrit measurement:} Measure the qutrit in the 
computational basis $\{|A\rangle, |B\rangle, |C\rangle\}$.
\end{enumerate}

\section{Complete Joint Probability Distribution}

\subsection{Discard A ($\Pi_{BC}$)}

From Eq.~(\ref{eq:psi_BC}), $P(\Pi_{BC}) = 1/3$. The ancilla 
probabilities are $P(anc=0|\Pi_{BC}) = 0$, $P(anc=1|\Pi_{BC}) = 1/2$, 
$P(anc=2|\Pi_{BC}) = 1/2$. Conditioned on $anc=1$, the qutrit 
collapses to $|BC\rangle$:
\begin{equation}\label{eq:cond_BC1}
\begin{aligned}
P(A|anc=1, \Pi_{BC}) &= 0, \\
P(B|anc=1, \Pi_{BC}) &= \frac{1}{2}, \\
P(C|anc=1, \Pi_{BC}) &= \frac{1}{2}.
\end{aligned}
\end{equation}

The joint contributions are:
\begin{equation}\label{eq:joint_BC}
\begin{aligned}
P(\Pi_{BC}) P(anc=1|\Pi_{BC}) P(B|anc=1, \Pi_{BC}) 
&= \frac{1}{12}, \\[4pt]
P(\Pi_{BC}) P(anc=1|\Pi_{BC}) P(C|anc=1, \Pi_{BC}) 
&= \frac{1}{12}, \\[4pt]
P(\Pi_{BC}) P(anc=2|\Pi_{BC}) P(B|anc=2, \Pi_{BC}) 
&= \frac{1}{12}, \\[4pt]
P(\Pi_{BC}) P(anc=2|\Pi_{BC}) P(C|anc=2, \Pi_{BC}) 
&= \frac{1}{12}.
\end{aligned}
\end{equation}

\subsection{Discard B ($\Pi_{AC}$) and Discard C ($\Pi_{AB}$)}

Similar calculations for $\Pi_{AC}$ and $\Pi_{AB}$ yield the complete 
joint distribution.

\subsection{Full joint distribution table}

Summing all contributions:

\begin{equation}\label{eq:jointtable}
\boxed{
\begin{array}{c|ccc|c}
& A & B & C & \text{Total} \\
\hline
anc = 0 & \frac{2}{12} & \frac{1}{12} & \frac{1}{12} 
& \frac{4}{12} = \frac{1}{3} \\[8pt]
anc = 1 & \frac{1}{12} & \frac{2}{12} & \frac{1}{12} 
& \frac{4}{12} = \frac{1}{3} \\[8pt]
anc = 2 & \frac{1}{12} & \frac{1}{12} & \frac{2}{12} 
& \frac{4}{12} = \frac{1}{3} \\[8pt]
\hline
\text{Total} & \frac{4}{12} = \frac{1}{3} 
& \frac{4}{12} = \frac{1}{3} & \frac{4}{12} = \frac{1}{3} & 1
\end{array}
}
\end{equation}

\subsection{Conditional probabilities}

From the joint distribution, the conditional probabilities 
$P(qutrit = j | anc = i)$ are:

\begin{equation}\label{eq:conditionals}
\boxed{
\begin{aligned}
P(A|0) &= \frac{1}{2}, \quad
P(B|0) = \frac{1}{4}, \quad
P(C|0) = \frac{1}{4}, \\[8pt]
P(A|1) &= \frac{1}{4}, \quad
P(B|1) = \frac{1}{2}, \quad
P(C|1) = \frac{1}{4}, \\[8pt]
P(A|2) &= \frac{1}{4}, \quad
P(B|2) = \frac{1}{4}, \quad
P(C|2) = \frac{1}{2}.
\end{aligned}
}
\end{equation}

\section{Classical Epistemic Model}

We now construct the best possible classical epistemic model that 
reproduces the observed correlations. The model must satisfy:
\begin{enumerate}[leftmargin=*]
\item \textbf{Determinism:} The system has a definite ontic state 
$\lambda \in \{A, B, C\}$ at all times.
\item \textbf{Epistemic interpretation:} The quantum state 
$|\psi_0\rangle$ represents maximal ignorance: 
$P(\lambda) = 1/3$ for all $\lambda$.
\item \textbf{Perfect ancilla correlation:} Before the discard, the 
ancilla perfectly encodes $\lambda$: if $\lambda = A$, ancilla reads 
$0$; if $\lambda = B$, ancilla reads $1$; if $\lambda = C$, ancilla 
reads $2$.
\item \textbf{Faithful discard:} The discard operation eliminates one 
door uniformly at random, independent of $\lambda$ (ignorant-host 
variant).
\end{enumerate}

\subsection{The insufficiency of single conditional probabilities}

A naive approach to finding a quantum-classical separation might focus 
on the single conditional probability $P(qutrit=B|anc=0)$. However, a 
rigorous evaluation of the ignorant-host classical model reveals a 
surprising coincidence.

In the classical epistemic model with initial perfect correlation, 
the marginal probability of observing $anc=0$ after any discard 
procedure is $P(anc=0) = 1/3$. The joint probability of obtaining 
$qutrit=B$ and $anc=0$ is evaluated by summing over all possible 
discard operations $d \in \{BC, AC, AB\}$.

The only discard operation that allows both $\lambda=B$ (which would 
give $qutrit=B$) and an ancilla reading of $0$ in the ignorant-host 
model is $d=AB$ (discard C). Under this specific discard, the 
probability of the ancilla randomly indicating the state $A$ 
(thus $anc=0$) is $1/2$. Therefore:
\begin{equation}\label{eq:classical_joint}
\begin{aligned}
&P_{\text{classical}}(qutrit=B, anc=0) =\\
& P(d=AB) \, P(anc=0|d=AB) \, P(\lambda=B|d=AB)= \\[4pt]
&\frac{1}{3} \cdot \frac{1}{2} \cdot \frac{1}{2} 
= \frac{1}{12}.
\end{aligned}
\end{equation}

This yields the classical conditional probability:
\begin{equation}\label{eq:classical_cond}
P_{\text{classical}}(B|anc=0) 
= \frac{1/12}{1/3} = \frac{1}{4}.
\end{equation}

Remarkably, this exactly matches the quantum prediction from 
Eq.~(\ref{eq:conditionals}). The simple conditional probability 
$P(B|anc=0)$ cannot serve as a witness of quantum coherence, as 
the classical ignorant-host model can fully reproduce it. The true 
quantum-classical separation lies deeper in the correlational 
structure of the post-discard state.

\subsection{Physical origin of the degeneracy}

The matching values arise from fundamentally different mechanisms. 
In the classical case, the post-discard ancilla provides no information 
about the prize location because the ignorant host chooses randomly. 
The ancilla outcome is independent of the hidden variable $\lambda$, 
so conditioning on $anc=0$ leaves the distribution $P(\lambda)$ 
unchanged.

In the quantum case, the post-discard state after $\Pi_{AC}$ is:
\begin{equation}\label{eq:psi_AC_revisited}
\begin{aligned}
|\Psi_{AC}\rangle 
&= \frac{1}{\sqrt{2}}|AC\rangle\bigl(|0\rangle + |2\rangle\bigr) \\[4pt]
&= \frac{1}{2}\bigl(|A,0\rangle + |C,0\rangle 
+ |A,2\rangle + |C,2\rangle\bigr).
\end{aligned}
\end{equation}

This state is separable between the qutrit subspace 
$\text{span}\{|A\rangle, |C\rangle\}$ and the ancilla subspace 
$\text{span}\{|0\rangle, |2\rangle\}$. The ancilla outcome $0$ occurs 
with probability $1/2$, and conditioned on this outcome, the qutrit 
state collapses to the coherent superposition $|AC\rangle$. The 
probabilities for $A$ and $C$ are both $1/2$ due to the Born rule 
applied to this superposition.

The two mechanisms—classical independence and quantum separability—
yield identical single-conditionals, obscuring the fundamental 
difference between classical ignorance and quantum coherence. To 
expose this difference, we must examine correlations that involve 
\emph{both} the ancilla outcome and the specific discard operation.

\section{Corrected Violation Witness}

To isolate the violation, we consider the post-selected statistics for 
a \emph{known} discard operation. In an experimental implementation, 
the discard operation is chosen by the experimenter. The resulting 
conditional joint probabilities reveal the quantum-classical distinction 
with absolute clarity.

\subsection{Fundamental witness}

\begin{theorem}[Quantum-classical separation]\label{thm:separation}
Define the conditional joint probability witness:
\begin{equation}\label{eq:witness_def}
\mathcal{W} = P(anc = 0, qutrit = C \mid \text{discard } \Pi_{AC}).
\end{equation}
In any classical epistemic model with initial perfect ancilla-qutrit 
correlation, $\mathcal{W}_{\text{classical}} = 0$. Quantum mechanics 
predicts $\mathcal{W}_{\text{QM}} = 1/4$.
\end{theorem}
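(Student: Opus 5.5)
The proof has two independent halves, and I would handle them separately: a Born-rule computation for the quantum value, and an argument about the support of the classical model's distribution for the classical value.

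\textbf{Quantum part.} I would start from Eq.~(\ref{eq:psi_AC_revisited}), the normalized post-discard state $|\Psi_{AC}\rangle = \frac{1}{2}(|A,0\rangle + |C,0\rangle + |A,2\rangle + |C,2\rangle)$, which already conditions on the discard $\Pi_{AC}$ having occurred. The amplitude of the product basis vector $|C,0\rangle$ in this state is $1/2$, so the Born rule gives $P(anc=0, qutrit=C \mid \Pi_{AC}) = |\langle C,0|\Psi_{AC}\rangle|^2 = 1/4$. As a consistency check I would also compute sequentially, since the measurements are performed in sequence. First, $P(anc=0\mid\Pi_{AC}) = 1/2$. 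Conditioned on that outcome the qutrit collapses to $|AC\rangle$, and then $P(C) = 1/2$. The product is again $1/4$. The ordering of the two measurements does not matter here because they act on different tensor factors.

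\textbf{Classical part.} I would formalize the model using the four axioms of the previous section. The ontic state is $\lambda\in\{A,B,C\}$, and the ancilla record is a deterministic function $a(\lambda)$ with $a(A)=0$, $a(B)=1$, $a(C)=2$. The discard is a classical operation that removes a door and leaves the surviving ontic state, and with it the record, unchanged. The key point is that in a deterministic epistemic model the final qutrit readout reveals $\lambda$ and the ancilla readout reveals the record. The joint event $\{anc=0, qutrit=C\}$ would therefore require $\lambda=C$ together with $a(\lambda)=0$, which contradicts perfect correlation. Hence
\[
\mathcal{W}_{\text{classical}} = \sum_\lambda P(\lambda\mid \Pi_{AC})\,P(anc=0,qutrit=C\mid\lambda,\Pi_{AC}) = 0,
\]
because every term vanishes. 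Conditioning on the discard $\Pi_{AC}$ (eliminate $B$) only renormalizes $P(\lambda)$ over $\{A,C\}$ and cannot create support on a pair $(\lambda,\text{record})$ that never had any. The result holds for any prior and any discard rule, informed or ignorant, which gives the ``any classical epistemic model'' strength of the claim.

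\textbf{Main obstacle.} The delicate step is justifying that the discard cannot disturb the ancilla record or the ontic state. The section on the insufficiency of single conditional probabilities treated the post-discard ancilla as ``randomly indicating'' a state, which seemingly allows $anc=0$ with $\lambda=C$. I would resolve this by stating explicitly that the discard acts on the qutrit only, as $\Pi\otimes I$, so in the classical model it is a local operation that never touches the ancilla variable. I would further require that it does not relabel $\lambda$, which is the ``faithful discard'' axiom. Making this locality-plus-faithfulness assumption precise is the crux: it is exactly what forces the zero, whereas the quantum $1/4$ arises because $\Pi_{AC}$ coherently maps $|A\rangle$ onto a state with nonzero $|C\rangle$ amplitude.
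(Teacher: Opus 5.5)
Your proposal is correct and follows the paper's proof. The quantum half is the same Born-rule calculation $P(anc=0\mid\Pi_{AC})\,P(qutrit=C\mid anc=0,\Pi_{AC})=\tfrac12\cdot\tfrac12$, where reading off the amplitude $\langle C,0|\Psi_{AC}\rangle=\tfrac12$ is an equivalent shortcut, and the classical half is the same observation that perfect correlation forces $anc=0\Rightarrow\lambda=A\Rightarrow qutrit=A$. Your explicit assumption that the discard leaves $\lambda$ and the ancilla record undisturbed is what the paper's proof uses tacitly. You are right that this assumption, not initial perfect correlation alone, produces the zero: the record-randomizing model in the paper's earlier ignorant-host calculation would give $\mathcal{W}=1/4$.
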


\begin{proof}
\textbf{Classical case:} With initial perfect correlation, $anc=0$ 
occurs if and only if the true initial state was $\lambda=A$. The 
discard operation $\Pi_{AC}$ eliminates $\lambda=B$ but retains 
$\lambda=A$ and $\lambda=C$. If $anc=0$ is observed after discard 
$\Pi_{AC}$, the underlying deterministic state must be $\lambda=A$ 
with certainty. Consequently, the final qutrit measurement must yield 
$A$, making outcome $C$ impossible. Hence, $\mathcal{W}_{\text{classical}} 
= 0$.

\textbf{Quantum case:} From Eq.~(\ref{eq:psi_AC}), the coherent state 
after $\Pi_{AC}$ is:
\begin{equation}\label{eq:psi_AC_witness}
|\Psi_{AC}\rangle = \frac{1}{\sqrt{2}}|AC\rangle\bigl(|0\rangle
+|2\rangle\bigr).
\end{equation}
The probability of observing $anc=0$ is $1/2$. Conditioned on $anc=0$, 
the qutrit state collapses precisely to the coherent superposition 
$|AC\rangle = \frac{1}{\sqrt{2}}(|A\rangle+|C\rangle)$. The subsequent 
projective measurement on the qutrit (qt) yields outcome $C$ with probability 
$|\langle C|AC\rangle|^2 = 1/2$. Thus:
\begin{equation}\label{eq:witness_calculation}
\begin{aligned}
\mathcal{W}_{\text{QM}} 
&= P(anc=0 \mid \Pi_{AC}) 
\cdot P(qt=C \mid anc=0, \Pi_{AC}) \\[4pt]
&= \frac{1}{2} \cdot \frac{1}{2} = \frac{1}{4}.
\end{aligned}
\end{equation}
\end{proof}

The gap between $0$ and $1/4$ is absolute and does not depend on any 
free parameters or experimental settings. It constitutes a violation 
that cannot be explained by any classical model with deterministic 
initial correlations.

\subsection{General witness operators}

The result of Theorem~\ref{thm:separation} can be generalized. For 
any discard projector $\Pi_{ij}$ that eliminates basis state $|k\rangle$ 
(where $\{i,j,k\} = \{A,B,C\}$), and for any ancilla outcome 
corresponding to one of the surviving states, the conditional joint 
probability of obtaining the \emph{other} surviving state is $1/4$ in 
quantum mechanics and $0$ in the classical model.

Define the family of witnesses:
\begin{equation}\label{eq:witness_family}
\begin{aligned}
\mathcal{W}_{A,B} &= P(anc=0, qutrit=B \mid \Pi_{AB}) 
= \tfrac{1}{4},  \\[4pt]
\mathcal{W}_{A,C} &= P(anc=0, qutrit=C \mid \Pi_{AC}) 
= \tfrac{1}{4},  \\[4pt]
\mathcal{W}_{B,A} &= P(anc=1, qutrit=A \mid \Pi_{AB}) 
= \tfrac{1}{4}  \\[4pt]
\mathcal{W}_{B,C} &= P(anc=1, qutrit=C \mid \Pi_{BC}) 
= \tfrac{1}{4},  \\[4pt]
\mathcal{W}_{C,A} &= P(anc=2, qutrit=A \mid \Pi_{AC}) 
= \tfrac{1}{4},  \\[4pt]
\mathcal{W}_{C,B} &= P(anc=2, qutrit=B \mid \Pi_{BC}) 
= \tfrac{1}{4}.
\end{aligned}
\end{equation}

Any one of these six witnesses suffices to demonstrate the violation. 
For experimental robustness, one may measure all six and combine them.

\subsection{Summed witness}

A particularly convenient witness is the sum over all six discordant 
pairs:
\begin{equation}\label{eq:sum_witness}
\begin{aligned}
\mathcal{S} &= \mathcal{W}_{A,B} + \mathcal{W}_{A,C} 
+ \mathcal{W}_{B,A} + \mathcal{W}_{B,C} 
+ \mathcal{W}_{C,A} + \mathcal{W}_{C,B} \\[4pt]
&= 6 \times \frac{1}{4} = \frac{3}{2}, \\[4pt]
\end{aligned}
\end{equation}

The summed witness $\mathcal{S}$ amplifies the violation by a factor 
of six, providing a statistically robust signal even with limited 
experimental data.

\subsection{Physical interpretation}

Why does quantum mechanics produce $\mathcal{W} = 1/4$ while classical 
theory gives $0$? The answer lies in the structure of the state 
$|\Psi_{AC}\rangle = \frac{1}{\sqrt{2}}|AC\rangle(|0\rangle+|2\rangle)$. 
This is a separable state: the ancilla carries information about which 
discard was performed (it lives in $\text{span}\{|0\rangle,|2\rangle\}$, 
the two non-discarded states), but within that subspace, it is completely 
\emph{uncorrelated} with the qutrit state.

When the experimenter measures $anc=0$, the qutrit collapses to the 
coherent superposition $|AC\rangle$. The subsequent measurement can 
yield $C$ with probability $1/2$, even though the ancilla outcome $0$ 
classically ``suggests'' the qutrit should be in state $A$.

In any classical model, the ancilla outcome is determined by the 
pre-existing hidden variable $\lambda$. If $anc=0$ is observed, then 
$\lambda = A$ must hold, and the qutrit measurement must reveal $A$. 
The discordant outcome $(anc=0, qutrit=C)$ is impossible because it 
would require $\lambda=A$ and $\lambda=C$ simultaneously.

This impossibility is the operational signature of quantum coherence. 
The qutrit state $|AC\rangle$ is not a statistical mixture of 
$|A\rangle$ and $|C\rangle$ with unknown weights; it is a genuinely 
new physical state that can produce outcomes inconsistent with any 
definite pre-existing value.

\section{Noise Tolerance and Experimental Imperfections}

\subsection{White noise model}

Realistic experiments suffer from noise. We model the dominant noise 
source as white noise acting on the initial qutrit state. The noisy 
initial state is:
\begin{equation}\label{eq:noisy_state}
\rho_0(\epsilon) = (1-\epsilon)|\psi_0\rangle\langle\psi_0| 
+ \epsilon \frac{I_3}{3},
\end{equation}
where $\epsilon \in [0,1]$ is the noise fraction and $I_3$ is the 
$3 \times 3$ identity matrix. The ancilla is initially in the pure 
state $|0\rangle\langle 0|$.

\subsection{Evolution through the protocol}

Applying the unitary $U$ to the noisy state and then the discard 
projector $\Pi_{AC}$ yields the post-discard state. A detailed 
calculation shows that the mixed part contributes:
\begin{equation}\label{eq:noise_mixed}
\frac{\epsilon}{6}|AC\rangle\langle AC| 
\otimes (|0\rangle\langle 0| + |2\rangle\langle 2|).
\end{equation}

Remarkably, the form of the post-discard state is preserved under 
white noise. The overall probability of obtaining the $\Pi_{AC}$ 
outcome remains $1/3$, independent of $\epsilon$:
\begin{equation}\label{eq:noise_prob}
P_\epsilon(\Pi_{AC}) = (1-\epsilon)\frac{1}{3} 
+ \frac{\epsilon}{3}\left(\frac{1}{2} + 0 + \frac{1}{2}\right) 
= \frac{1}{3}.
\end{equation}

\subsection{Immunity of the witness}

Since the post-discard state maintains the same structure, the 
conditional probabilities are completely independent of $\epsilon$:
\begin{equation}\label{eq:noise_conditional}
P_\epsilon(qutrit=C | anc=0, \Pi_{AC}) = \frac{1}{2}, 
\quad \forall \epsilon \in [0,1].
\end{equation}

Consequently:
\begin{equation}\label{eq:noise_witness}
\mathcal{W}_\epsilon = \frac{1}{4}, \quad \forall \epsilon \in [0,1].
\end{equation}

\textbf{The violation is completely immune to white noise.} This 
remarkable robustness arises because white noise respects the 
permutation symmetry of the three basis states, leaving the structure 
of the symmetric superpositions intact.

\subsection{Other noise models}

\textbf{Phase damping (dephasing):} Dephasing between the basis states 
reduces the coherence of the superposition $|AC\rangle$ but does not 
affect the equal-amplitude structure. The witness $\mathcal{W}$ remains 
$1/4$ under pure dephasing.

\textbf{Amplitude damping:} Biased noise that preferentially damps 
certain basis states will affect the witness. However, such noise can 
be diagnosed by measuring marginal probabilities and corrected through 
state tomography.

\textbf{Conclusion on noise:} The witness $\mathcal{W}$ is exceptionally 
robust to common noise models, making it suitable for experimental 
implementations with realistic imperfections.

\section{Comparison with Established No-Go Theorems}

\begin{table*}[t]
\centering
\caption{Comprehensive comparison of our protocol with established 
no-go theorems in quantum foundations.}
\label{tab:comparison_full}
\begin{tabular}{@{}p{2cm} p{2.2cm} p{2cm} p{2cm} p{2.5cm}@{}}
\toprule
\textbf{Theorem} & \textbf{Systems} & \textbf{Assumption} 
& \textbf{Measurements} & \textbf{Violation} \\
\midrule
Bell \cite{bell1964} 
& 2 entangled qubits 
& Locality 
& 2 per party 
& Inequality \\[4pt]
Kochen-Specker \cite{kochen1967} 
& 1 qutrit 
& Non-contextuality 
& 117 rays 
& Impossibility \\[4pt]
KCBS \cite{kcbs2008} 
& 1 qutrit 
& Non-contextuality 
& 5 cyclic 
& Inequality \\[4pt]
Leggett-Garg \cite{leggett1985} 
& 1 (macro) 
& Macrorealism 
& 3 sequential 
& Inequality \\[4pt]
PBR \cite{pbr2012} 
& $\geq 2$ copies 
& Prep.\ independence 
& 1 per copy 
& Contradiction \\[4pt]
\hline
\textbf{This work} 
& 2 qutrits (System + Ancilla)
& Determinism + perfect correlation 
& 2 sequential + discard 
& Equality: $1/4$ vs $0$ \\
\bottomrule
\end{tabular}
\end{table*}

Our protocol combines the simplicity of a single system with the 
operational clarity of the Monty Hall puzzle. It requires no spatial 
separation (unlike Bell), no complex set of rays (unlike Kochen-Specker), 
no multiple copies (unlike PBR), and no macroscopic distinctness 
(unlike Leggett-Garg). The violation is an exact numerical mismatch 
rather than a statistical inequality.

\section{Experimental Implementation}

\subsection{Platform options}

The protocol can be implemented on any platform supporting qutrit 
operations and controlled entangling gates.

\textbf{Photonic qutrits:} Qutrits can be encoded in orbital angular 
momentum \cite{mair2001} or path degrees of freedom \cite{weihs2001}. 
The symmetric superposition $|\psi_0\rangle$ is prepared using a 
tritter. The unitary $U$ in Eq.~(\ref{eq:Ufull}) is a permutation of 
optical modes. The projectors $\Pi_{BC}, \Pi_{AC}, \Pi_{AB}$ require 
mode-mixing operations followed by post-selection.

\textbf{Trapped ions:} Qutrits can be encoded in three hyperfine 
levels \cite{lehner2023}. The unitary $U$ corresponds to a sequence 
of controlled rotations.

\textbf{Superconducting circuits:} Transmon qutrits \cite{blais2021} 
offer fast gate times and high-fidelity readout.

\subsection{Estimated measurement times}

For a photonic implementation with a single-photon source rate of 
$10^6$ photons per second and $10\%$ overall system efficiency, the 
detected rate is $10^5$ events per second. To measure $\mathcal{W}$ 
with a statistical uncertainty of $\pm 0.01$ requires approximately 
$10^4$ events per discard setting, or $3 \times 10^4$ total events. 
This can be acquired in under one second.

\section{Discussion}

\subsection{What the violation means}

The violation $\mathcal{W}_{\text{QM}} = 1/4 \neq \mathcal{W}_
{\text{classical}} = 0$ demonstrates that quantum mechanics cannot be 
interpreted as classical ignorance about a pre-existing deterministic 
reality, even when the classical model is allowed to have perfect 
initial correlations between the system and an ancilla.

The ancilla in our protocol serves as a ``memory'' of the initial 
correlation. In any classical model where this memory is faithful, 
the post-discard measurement must respect this correlation. Quantum 
mechanics violates this expectation because the projective discard 
creates a state where the ancilla is correlated with the discard 
operation rather than with the pre-existing value.

\subsection{Relation to contextuality}

This is a manifestation of quantum contextuality in a transparent 
operational setting. The ancilla measurement outcome depends on the 
specific discard operation performed, and the joint distribution of 
discard, ancilla, and qutrit outcomes cannot be explained by any 
non-contextual deterministic model.

\subsection{Epistemological implications}

Our result adds to the growing body of evidence that the quantum state 
is ontic rather than epistemic \cite{Harrigan2010, Spekkens2005, 
pbr2012}. The failure of classical epistemic models to reproduce our 
protocol's predictions is particularly striking because the protocol 
uses only a single qutrit, a symmetric initial state, and an ancilla 
that provides a classical record—exactly what an epistemic 
interpretation would require.

\subsection{Limitations}

Our protocol tests a specific class of classical models: those with 
deterministic initial values and perfect initial ancilla correlations. 
It does not rule out all possible epistemic interpretations. 
Retrocausal models or $\psi$-epistemic models with finely tuned 
overlapping supports \cite{lewis2012, aaronson2013} might evade our 
no-go result, though they would need to explain the precise $1/4$ 
value without invoking quantum superposition.

\section{Conclusion}

We have presented a bipartite-qutrit protocol, inspired by the Monty Hall 
puzzle, that operationally distinguishes quantum coherence from 
classical ignorance. The protocol entangles a qutrit with an ancilla, 
applies a rank-1 projective discard operation, and measures the 
resulting joint probabilities. The witness $\mathcal{W} = P(anc=0, 
qutrit=C \mid \Pi_{AC})$ takes the value $1/4$ in quantum mechanics 
and $0$ in any classical epistemic model with perfect initial 
correlations.

The key insights of this work are:
\begin{enumerate}[leftmargin=*]
\item The simple conditional probability $P(B|anc=0)$ fails as a 
witness because both quantum and classical models yield $1/4$, 
albeit through fundamentally different mechanisms.
\item The correct witness involves the conditional joint probability 
that correlates ancilla outcomes with specific discard operations.
\item The violation is completely immune to white noise and robust 
against moderate dephasing.
\item The protocol requires minimal resources: a single pair of 
entangled qutrits and sequential measurements—far fewer than Bell 
tests or Kochen-Specker proofs.
\end{enumerate}

The explicit $9 \times 9$ unitary matrix provided in Eq.~(\ref{eq:Ufull}) 
gives a concrete blueprint for experimental realization. The Monty Hall 
analogy makes the conceptual underpinning accessible to students and 
researchers alike. We hope this work contributes to the ongoing dialogue 
on quantum foundations and provides a useful tool for both research 
and education.

\section*{Acknowledgments}
Jorge Meza-Dom{\'\i}nguez thanks SECIHTI-M{\'e}xico for the doctoral 
scholarship No. 1235731. This work was supported by SECIHTI M{\'e}xico 
under grants SECIHTI CBF-2025-G-1720 and CBF-2025-G-176. The author 
gratefully acknowledges the computing time granted by LANCAD and 
CONACYT on the Supercomputer Hybrid Cluster ``Xiuhcoatl'' at CGSTIC, 
CINVESTAV.
\bibliographystyle{unsrtnat}
\bibliography{ref}
\end{document}